\newtheorem{theorem}{Theorem} 
\newtheorem{lemma}[theorem]{Lemma} 
\newtheorem{obs}{Observation}
\newcommand{\m}{\xi}
\newcommand{\CG}{\ensuremath{\mathcal{G}}}
\Crefname{table}{Table}{Tables}
\Crefname{obs}{Observation}{Observations}
\Crefname{cor}{Corollary}{Corollaries}
\Crefname{algorithm}{Algorithm}{Algorithms}
\Crefname{lemma}{Lemma}{Lemmas}
\newcommand{\ceil}[1]{\left\lceil{#1}\right\rceil}
\newcommand{\floor}[1]{\left\lfloor{#1}\right\rfloor}
\newcommand{\card}[1]{\left|{#1}\right|}
\newcommand{\etal}{{\em et~al.\/}}
\newcommand{\REM}[1]{}
\newcommand{\QED}{}  
\title{New Bounds on the Biplanar and $k$-Planar Crossing Numbers} 
\author{
	Alireza Shavali\thanks{Department of Computer Engineering, 
		Sharif University of Technology, Tehran 14588-89694, Iran. Email: \url{ashavali@ce.sharif.edu}.}
	\and Hamid Zarrabi-Zadeh\thanks{Department of Computer Engineering, 
	Sharif University of Technology, Tehran 14588-89694, Iran. Email: \url{zarrabi@sharif.edu}.}}
\date{}
\begin{document}
\maketitle

\begin{abstract}
The biplanar crossing number of a graph $G$ 
is the minimum number of crossings over all possible drawings of the edges
of $G$ in two disjoint planes. 
We present new bounds on the biplanar crossing number of 
complete graphs and complete bipartite graphs. 
In particular, we prove that 
the biplanar crossing number of complete bipartite graphs
can be approximated to within a factor of $3$,
improving over the best previously known approximation factor of $4.03$.
For complete graphs, 
we provide a new approximation factor of $3.17$,
improving over the best previous factor of $4.34$.
We provide similar improved approximation factors for the $k$-planar 
crossing number of complete graphs and complete bipartite graphs,
for any positive integer $k$.
We also investigate the relation between (ordinary) crossing number 
and biplanar crossing number of general graphs in more depth,
and prove that any graph with a crossing number of at most $10$ is biplanar.
\end{abstract}



\section{Introduction}

An embedding (or drawing) of a graph $G$ in the Euclidean plane 
is a mapping of the vertices of $G$ to distinct points in the plane and 
a mapping of edges to smooth curves between their corresponding vertices. 
A planar embedding is a drawing of the graph such that no two edges cross each other, 
except for possibly in their endpoints. 
A graph that admits such a drawing is called {planar}. 
A \emph{biplanar embedding} of a graph $G=(V,E)$ 
is a decomposition of the graph into two graphs $G_{1}=(V,E_{1})$ and $G_{2}=(V,E_{2})$ 
such that $E=E_{1} \cup E_{2}$ and $E_{1} \cap E_{2} = \emptyset $,
together with planar embeddings of $G_{1}$ and $G_{2}$. 
In this case, we call $G$ biplanar. 
Biplanar embeddings are central to the computation of thickness of graphs~\cite{mutzel1998},
with applications to VLSI design~\cite{owen1971}.
It is well-known that planarity can be recognized in linear time, 
while biplanarity testing is NP-complete~\cite{mas1983}. 

Let $cr(G)$ be the minimum number of edge crossings over all drawings of $G$ in the plane,
and let $cr_{k}(G)$ be the minimum of $cr(G_{1})+cr(G_{2})+ \cdots +cr(G_{k})$ 
over all possible  decompositions of $G$ into $k$ subgraphs $G_{1}, G_{2}, \ldots ,G_{k}$. 
We call $cr(G)$ the \emph{crossing number} of $G$,
and $cr_{k}(G)$ the \emph{$k$-planar crossing number} of $G$. 
Throughout this paper, we only consider \textit{simple drawings} for each subgraph $G_{i}$,
in which 
no two edges cross more than once, 
and no three edges cross at a point
(such drawings are sometimes called nice drawings).
Moreover, we denote by $n$ the number of vertices, and by $m$ the number of edges of a graph.

Determining the crossing number of complete graphs and complete bipartite graphs 
has been the subject of extensive research 
over the past decades. 
In 1955, Zarankiewicz~\cite{zar1955} conjectured that the crossing number $cr(K_{p,q})$ 
of the complete bipartite graph $K_{p,q}$ is equal to 
$$
Z(p,q)  :=  \floor{\frac{p}{2}} \floor{\frac{p-1}{2}} \floor{\frac{q}{2}} \floor{\frac{q-1}{2}}.
$$
He also established a drawing with that many crossings. 
In 1960, Guy~\cite{guy1960} conjectured that the crossing number $cr(K_{n})$
of the complete graph $K_n$ is equal to 
$$
Z(n)  :=  {1 \over 4} \floor{\frac{n}{2}} \floor{\frac{n-1}{2}} \floor{\frac{n-2}{2}} \floor{\frac{n-3}{2}}.
$$
Both conjectures have remained open after more than six decades.
For the biplanar case, even formulating such conjectures seems to be hard. 
As noted in~\cite{cz2006}, techniques like {embedding method} and the {bisection width method} 
which are useful for bounding ordinary crossing numbers 
do not seem applicable to the biplanar case. 

In 1971, Owens~\cite{owen1971} described a biplanar embedding of $K_{n}$
with almost ${7\over 24} Z(n)$ crossings. 
In 2006, Czabarka \etal~\cite{cz2006} presented a biplanar embedding for $K_{p,q}$
with about ${2 \over 9} Z(p, q)$ crossings. 
They also proved that
$cr_{2}(K_{n}) \geq {n^4}/{952}$ and
$cr_{2}(K_{p,q}) \geq {p(p-1)q(q-1)}/{290}$. 
Shahrokhi \etal~\cite{sha2007} generalized these lower bounds 
to the $k$-planar case.
Recently, Pach \etal~\cite{pach2018} proved that for every graph $G$
and any positive integer $k$,
$cr_{k}(G) \leq \left( {2 \over k^2} - {1 \over k^3}\right) cr(G)$. 
This includes as a special case the inequality
$cr_{2}(G) \leq \frac{3}{8}cr(G)$, originally proved by Czabarka \etal~\cite{cz2008}.





\bigskip\noindent{\bf Our results. \ }
In this paper, we present several new bounds for approximating 
the biplanar and $k$-planar crossing number
of complete graphs and complete bipartite graphs.
Given a positive integer $k$ and a real constant $\alpha \geq 1$, 
we say that $cr_k(K_n)$ is \emph{approximated} to within a factor of $\alpha$,
if there is an upper bound $f(n)$ and a lower bound $g(n)$ on the value of $cr_k(K_n)$ 
such that $\lim_{n \rightarrow \infty} {f(n) \over g(n)} \leq \alpha$.
Here, $\alpha$ is called an \emph{asymptotic approximation factor} for $cr_k(K_n)$.
Similarly, 
we say that $cr_k(K_{p,q})$ is approximated to within a factor of $\alpha$, 
if there is an upper bound $f(p,q)$ and a lower bound $g(p,q)$ on the value of $cr_k(K_{p,q})$ 
such that $\lim_{p,q \rightarrow \infty} {f(p,q) \over g(p,q)}$ exists and is no more than $\alpha$.
The results presented in this paper are summarized below.

\begin{itemize}
\item 
	We prove that for all $p,q \geq 21$, 
	$ cr_{2}(K_{p,q}) \geq {p(p-1)q(q-1)}/{216}$.
	This significantly improves the best current lower bound of 
	$cr_{2}(K_{p,q}) \geq {p(p-1)q(q-1)}/{290}$,
	due to Czabarka \etal~\cite{cz2006}.
	Combined with the upper bound of 
	$cr_{2}(K_{p,q}) \leq {2 \over 9} Z(p,q) + o(p^2q^2)$\footnote{By definition,
		$f(x, y) = o(g(x, y))$ if $\lim_{x, y \rightarrow \infty} {f(x, y) \over g(x, y)} = 0$.
	}~\cite{cz2006},
	our result implies an asymptotic approximation factor of 3 for $cr_{2}(K_{p,q}) $,
	improving over the best previously known  approximation factor of $4.03$.

\item 
	For complete graphs, we show that $cr_{2}(K_{n}) \geq \frac{n^4}{694}$,
	improving the  best current lower bound of 
	$cr_{2}(K_{n}) \geq \frac{n^4}{952}$~\cite{cz2006}.
	Combined with the upper bound of $cr_{2}(K_{n}) \leq {7 \over 24} Z(n) + o(n^4)$
	due to Owens~\cite{owen1971},
	we achieve an asymptotic approximation factor of $3.17$ for $cr_{2}(K_{n}) $,
	improving the best previously known approximation factor of $4.34$.
	
\item 
	We investigate the relation between $cr(G)$ and $cr_{2}(G)$  in general graphs,
	and pose a new problem of finding the maximum integer $\m(r)$, 
	for a given integer $r \geq 0$,
	such that $cr(G) \leq \m(r)$ implies $cr_{2}(G) \leq r$, for all graphs $G$.
	For the special case of $r = 0$, we show that $\m(r) \geq 10$.
	It implies that 
	any graph $G$ that can be drawn in the plane with at most 10 crossings is biplanar.

\item 
	We extend our lower bounds for the biplanar crossing number to the $k$-planar case,
	for any positive integer $k$.
	In particular, we show that for sufficiently large $n$,
	$cr_{k}(K_{n}) \geq {n^4}/{(232k^2)}$,
	improving the best current lower bound of $cr_{k}(K_{n}) \geq {n^4}/{(432k^2)}$,
	due to Shahrokhi \etal~\cite{sha2007}.
	Considering the recent upper bound of 
	$cr_{k}(K_{n}) \leq {2 \over k^2} Z(n)$
	proved by Pach \etal~\cite{pach2018},
	we obtain an asymptotic approximation factor of $7.25$ for $cr_{k}(K_{n})$,
	improving the best current approximation factor of $13.5$  available for $cr_{k}(K_{n})$.
	
\item 
	Finally, we  prove that for any positive integer $k$, 
	$cr_{k}(K_{p,q}) \geq {p(p-1)q(q-1)}/{(73.2k^2)}$,
	improving the current lower bound of $cr_{k}(K_{p,q}) \geq {p(p-1)q(q-1)}/{(108k^2)}$
	due to Shahrokhi \etal~\cite{sha2007}.
	Combined with the upper bound of $cr_{k}(K_{n}) \leq {2 \over k^2} Z(p,q)$~\cite{pach2018},
	we obtain an asymptotic approximation factor of $9.15$ for $cr_{k}(K_{p,q}) $,
	improving the best current factor of $13.5$.
	
\end{itemize}

A summary of the asymptotic approximation factors 
for the biplanar and $k$-planar crossing number of $K_n$ and $K_{p,q}$
is presented in \Cref{table:summary}.

\begin{table}[]
\centering
\caption{
Summary of the asymptotic approximation factors 
for the biplanar and $k$-planar crossing number of complete graphs and complete bipartite graphs.
}
\label{table:summary}
\begin{tabular}{ccc}
\hline
\multirow{2}{*}{\sc Crossing Number } & \sc Asymptotic &    \multirow{2}{*}{\sc \hspace{1em} Reference \hspace{1em} }      \\ 
    & \sc\hspace{0em} Approx Factor  \hspace{0em} &    \\ \hline\hline
\multirow{2}{*}{$cr_2(K_{p,q})$} & 4.03                     & ~\cite{cz2006} \\
                                 & \textbf{3}               & {[}This work{]}          \\ \hline
\multirow{2}{*}{$cr_2(K_n)$}     & 4.34                     & \cite{cz2006,owen1971} \\
                                 & \textbf{3.17}            & {[}This work{]}          \\ \hline
\multirow{2}{*}{$cr_k(K_{p,q})$} & 13.5                     & \cite{pach2018,sha2007} \\
                                 & \textbf{9.15}            & {[}This work{]}          \\ \hline
\multirow{2}{*}{$cr_k(K_n)$}     & 13.5                     & \cite{pach2018,sha2007} \\
                                 & \textbf{7.25}            & {[}This work{]}          \\ \hline
\end{tabular}
\end{table}



\section{Preliminaries}

One of the main combinatorial tools typically used for 
deriving lower bounds on the crossing number of graphs
is the counting method (see, e.g.,~\cite{guy1968toroidal,sha1995}).
We use the following generalization of the counting method in this paper.

\begin{lemma}[Counting method]
	\label{lem:counting} 
	Let $G$ be a simple graph that contains $\alpha$ copies of a subgraph $H$.
	If in every $k$-planar drawing of $G$,
	each crossing of the edges belongs to at most $\beta$ copies of $H$, then 
	$$cr_{k}(G) \geq \ceil{{\alpha \over \beta} \, cr_{k}(H)}.$$
\end{lemma}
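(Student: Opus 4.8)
The plan is to run a double-counting argument over the $\alpha$ copies of $H$, relating the crossings lying inside each copy to the total number of crossings in an optimal $k$-planar drawing of $G$. First I would fix an optimal $k$-planar drawing $D$ of $G$, that is, a decomposition $G = G_1 \cup \cdots \cup G_k$ together with plane drawings of the $G_j$ whose crossings total exactly $cr_k(G)$. Let $H_1, \ldots, H_\alpha$ denote the $\alpha$ copies of $H$ in $G$. Restricting $D$ to the edges of a copy $H_i$ induces a drawing $D_i$ of $H_i$: the induced partition $H_i = (H_i \cap G_1) \cup \cdots \cup (H_i \cap G_k)$, equipped with the inherited plane drawings, is a $k$-planar drawing of $H$. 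Since deleting edges neither creates crossings nor violates simplicity, $D_i$ is a simple $k$-planar drawing of $H$, and hence has at least $cr_k(H)$ crossings.

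Next I would count pairs (crossing, containing copy) in two ways. Let $X$ be the set of the $cr_k(G)$ crossings of $D$, and for each $x \in X$ let $b(x)$ be the number of copies $H_i$ whose edge set contains both edges that participate in $x$; by hypothesis $b(x) \le \beta$ for every $x$. On one hand, the number of crossings of $D_i$ equals the number of crossings of $D$ both of whose edges lie in $H_i$, so summing over $i$ gives $\sum_{i=1}^{\alpha} cr(D_i) = \sum_{x \in X} b(x)$. On the other hand, $\sum_{i=1}^{\alpha} cr(D_i) \ge \alpha \cdot cr_k(H)$ by the previous step. Combining these, $\alpha \cdot cr_k(H) \le \sum_{x \in X} b(x) \le \beta\,|X| = \beta\,cr_k(G)$. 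Dividing by $\beta$ and using that $cr_k(G)$ is an integer yields $cr_k(G) \ge \ceil{(\alpha/\beta)\,cr_k(H)}$, as claimed.

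The main point to be careful about is the first step: I must ensure that the restriction of a $k$-planar drawing of $G$ to a subgraph is again a valid $k$-planar drawing of $H$, so that its crossing count is bounded below by $cr_k(H)$. This is exactly where the generalization from the ordinary to the $k$-planar counting method enters, and it hinges on the fact that the fixed decomposition of $G$ into $k$ planar-drawn parts restricts to a decomposition of each copy into $k$ parts with no new crossings. The double-counting step is then routine; the only mild subtlety is that $b(x)$ may be strictly smaller than $\beta$ for some crossings, but since we only invoke the upper bound $b(x) \le \beta$, the inequality still goes through.
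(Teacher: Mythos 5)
Your proposal is correct and follows essentially the same argument as the paper: fix an optimal $k$-planar drawing of $G$, observe that its restriction to each copy of $H$ is a $k$-planar drawing of $H$ with at least $cr_k(H)$ crossings, and double-count crossing--copy incidences using the bound $\beta$, with the ceiling justified by integrality of $cr_k(G)$. Your write-up is simply a more detailed version of the paper's (rather terse) proof, making explicit the restriction step and the double-counting sum that the paper leaves implicit.
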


\begin{proof}
Let $D$ be a $k$-planar drawing of $G$, realizing $cr_{k}(G)$.
For each of the $\alpha$ copies of $H$, $D$ contains a $k$-planar drawing
with at least $cr_{2}(H)$ crossings.
Since each crossing is counted at most $\beta$ times by our assumption, 
the lemma statement follows. 
Note that a ceiling is put in the right-hand side of the inequality, 
because $cr_{k}(G)$ is always an integer.
\end{proof}

\noindent
The following lemma provides another main ingredient used throughout this paper.

\begin{lemma} \label{lem:hereditary} 
	Let $\CG$ be a hereditary class of graphs which is closed 
	under removing edges.
	Let $f$ be a linear function 
	$f(x) = cx$, for some constant $c$,
	and let $g$ be an arbitrary function.
	If for every graph $G$ in $\CG$, $cr(G) \geq f(m) - g(n)$,
	then $cr_k(G) \geq f(m) - k \cdot g(n)$ for all $G \in \CG$
	and all positive integers $k$.
\end{lemma}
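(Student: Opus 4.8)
The plan is to exploit the linearity of $f$ together with the additivity of the edge count across a $k$-planar decomposition. Fix a graph $G \in \CG$ with $n$ vertices and $m$ edges, and consider an optimal $k$-planar decomposition $G_1 = (V, E_1), \ldots, G_k = (V, E_k)$ realizing $cr_k(G)$, so that $E = E_1 \cup \cdots \cup E_k$ is a partition of the edge set and $cr_k(G) = \sum_{i=1}^k cr(G_i)$.

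The first step is to observe that each part $G_i$ lies in $\CG$. Indeed, $G_i$ is obtained from $G$ by deleting the edges outside $E_i$, and since $\CG$ is closed under removing edges, $G_i \in \CG$. Crucially, each $G_i$ shares the full vertex set $V$ with $G$, so it has exactly $n$ vertices; writing $m_i = \card{E_i}$ for its number of edges, the hypothesis applied to $G_i$ gives $cr(G_i) \geq f(m_i) - g(n) = c\,m_i - g(n)$.

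The second step is to sum these inequalities over $i = 1, \ldots, k$ and use that the $E_i$ partition $E$, whence $\sum_{i=1}^k m_i = m$. By linearity of $f$,
$$
cr_k(G) = \sum_{i=1}^k cr(G_i) \geq \sum_{i=1}^k \left( c\,m_i - g(n) \right) = c \sum_{i=1}^k m_i - k\,g(n) = c\,m - k\,g(n) = f(m) - k\,g(n),
$$
which is exactly the claimed bound.

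There is no serious obstacle here; the argument is essentially bookkeeping. The two points that must be handled with care are that the vertex count is preserved under the decomposition (so the subtracted term $g(n)$ is identical for every $G_i$, and therefore accumulates into a factor of $k$), and that the linearity of $f$ is precisely what lets the $c\,m_i$ terms recombine into $c\,m$ without any loss — an arbitrary $f$ would not permit this step. I note that the hereditary (induced-subgraph) closure of $\CG$ is not actually invoked in this proof; only closure under edge removal is used.
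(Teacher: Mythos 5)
Your proof is correct and follows essentially the same route as the paper: take an optimal $k$-planar decomposition $G_i=(V,E_i)$, note each $G_i$ stays in $\CG$, apply the hypothesis to each part, and sum using linearity of $f$ and $\sum_i m_i = m$. Your side remark is also apt — the paper's proof likewise only uses closure under edge removal (all parts keep the full vertex set), even though it nominally invokes the hereditary property.
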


\begin{proof}
	Fix a graph $G \in \CG$.
	Let $G = \bigcup_{i=1}^{k} G_k$
	be a decomposition of $G$ 
	into $k$ subgraphs $G_i = (V, E_i)$
	such that $\sum_{i=1}^{k} cr(G_{i})$ is minimum.
	By the hereditary property of $\CG$,
	each $G_i$ is in $\CG$, and hence
	$cr(G_i) \geq f(m_i) - g(n)$, where $m_i = \card{E_i}$.
	Therefore,
	$cr_k(G) = \sum_{i=1}^{k} cr(G_{i}) \geq \sum_{i=1}^{k} (f(m_i) - g(n)) 
	= c\sum_{i=1}^{k} m_i - \sum_{i=1}^{k} g(n)
	= f(m) - k \cdot g(n).$
	\QED
\end{proof}

\section{Lower Bounds for Complete Bipartite Graphs}

In this section, we provide new lower bounds on
the biplanar crossing number of complete bipartite graphs.
In particular, we improve the following bound due to 
Czabarka \etal~\cite{cz2006} which states that for all $p,q \geq 10$,
$$
	cr_{2}(K_{p,q}) \geq \frac{p(p-1)q(q-1)}{290}.
$$

\noindent
From Euler's formula,  
we have $cr(G) \geq m-3(n-2)$ for simple graphs, and
$cr(G) \geq m-2(n-2)$ for bipartite graphs.
Using \Cref{lem:hereditary}, we immediately get a lower bound of
$cr_2(G) \geq m-6(n-2)$ for simple graphs, and a lower bound of
$cr_2(G) \geq m-4(n-2)$ for bipartite graphs.

To establish stronger lower bounds, we need to incorporate more powerful ingredients. 
A graph is called $k$-planar, 
if it can be drawn in the plane in such a way that each edge has at most $k$ crossings.
It is known that every $1$-planar drawing of any $1$-planar graph
has at most $n-2$ crossings~\cite{czap2013}. 
(Note the difference between $k$-planar graphs, and $k$-planar crossing numbers.)
Removing one edge per crossing yields a planar graph.
Therefore, every 1-planar bipartite graph has at most $3n-6$ edges. 
Karpov~\cite{kar2012} proved that for every 1-planar bipartite graph with at least 4 vertices,
the inequality $m \leq 3n-8$ holds. 
In a recent work, Angelini \etal ~\cite{angelini2018} proved that for every 
2-planar bipartite graph we have $m \leq 3.5n-7$.
We use these results to obtain the following stronger lower bound.


\begin{lemma} \label{lem:karlb} 
	For every bipartite graph $G$ with $n \geq 4$, 
	$$cr_{2}(G) \geq 3m-17n+38. $$
\end{lemma}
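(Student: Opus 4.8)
The plan is to reduce the biplanar statement to an ordinary (single-plane) crossing bound and then invoke \Cref{lem:hereditary}. Concretely, I will first prove that every bipartite graph $G$ with $n \ge 4$ satisfies
$$cr(G) \ge 3m - \left(\tfrac{7}{2}n - 7\right) - (3n - 8) - (2n - 4) = 3m - \tfrac{17}{2}n + 19.$$
Since the class of bipartite graphs is hereditary and closed under edge deletion, and since $3m$ is linear in $m$ with no constant term, applying \Cref{lem:hereditary} with $k = 2$, $f(m) = 3m$, and $g(n) = \tfrac{17}{2}n - 19$ immediately yields $cr_2(G) \ge 3m - 2\left(\tfrac{17}{2}n - 19\right) = 3m - 17n + 38$, as desired. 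The fact that the constant $\tfrac{17}{2}n-19$ splits exactly into three pieces is what suggests the three-threshold argument below.

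To establish the single-plane bound, I would fix an optimal drawing $D$ of $G$ and delete edges greedily, charging the crossings removed against the three density thresholds for bipartite graphs: a planar bipartite graph has $m \le 2n - 4$ (Euler's formula), a $1$-planar bipartite graph has $m \le 3n - 8$ (Karpov~\cite{kar2012}), and a $2$-planar bipartite graph has $m \le \tfrac{7}{2}n - 7$ (Angelini \etal~\cite{angelini2018}). Each bound, read contrapositively against the current sub-drawing of $D$ (which still has all $n$ vertices and is still bipartite), guarantees a heavily crossed edge: while $m > \tfrac{7}{2}n - 7$, some edge carries at least $3$ crossings; once $m \le \tfrac{7}{2}n - 7$ but $m > 3n - 8$, some edge carries at least $2$; and once $m \le 3n - 8$ but $m > 2n - 4$, some edge carries at least $1$. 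Deleting such an edge at each step removes at least that many crossings while lowering $m$ by one, and the process halts once $m \le 2n-4$.

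Summing the crossings removed telescopes cleanly: the first phase removes at least about $3\left(m - (\tfrac{7}{2}n - 7)\right)$ crossings, the second at least about $2\left((\tfrac{7}{2}n - 7) - (3n - 8)\right)$, and the third at least about $(3n - 8) - (2n - 4)$; the intermediate threshold terms cancel, leaving $cr(G) \ge 3m - (\tfrac{7}{2}n - 7) - (3n - 8) - (2n - 4)$. Because $cr(G)$ counts every crossing present in $D$ and we only ever discard crossings, this is a valid lower bound on the optimum.

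The main obstacle I anticipate is making the threshold step airtight. One must verify that each density bound applies to every intermediate graph produced by the deletions; this holds because deleting edges preserves bipartiteness and keeps the vertex count fixed at $n$, so the inequalities remain valid for $n \ge 4$ (which is precisely the hypothesis of the lemma). One must also check that the contrapositive is legitimate, i.e.\ that a sub-drawing in which every edge has at most $j$ crossings genuinely witnesses that the current graph is planar, $1$-planar, or $2$-planar for $j=0,1,2$ respectively, so that the corresponding edge bound can be applied. A secondary, more routine point is the harmless integer rounding at the half-integer threshold $\tfrac{7}{2}n - 7$ for odd $n$: since the number of edges deleted in each phase is an integer and the final target $3m - 17n + 38$ is an integer, the rounding only works in our favor, so no loss is incurred.
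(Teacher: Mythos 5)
Your proposal is correct and takes essentially the same route as the paper's own proof: the identical three-phase edge-peeling argument (using the Angelini \etal\ $2$-planar bound, Karpov's $1$-planar bound, and the bipartite Euler bound) to establish $cr(G) \geq 3m - 8.5n + 19$, followed by the same application of \Cref{lem:hereditary} with $k=2$, $f(m)=3m$, $g(n)=8.5n-19$. The only cosmetic difference is that the paper's final phase invokes the Euler-based inequality $cr(G') \geq m' - 2(n-2)$ on the residual drawing directly, rather than continuing to peel single-crossing edges, but the two formulations yield the same count of $n-4$ crossings.
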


\begin{proof}
Let $G$ be a bipartite graph with $n$ vertices and $m$ edges. 
Fix a drawing of $G$ with a minimum number of crossings.
If $m > 3.5n - 7$, then by ~\cite{angelini2018}, 
there must be an edge in the drawing with at least three crossings. 
We repeatedly remove such an edge
until we reach a drawing with $\floor{3.5n - 7}$ edges.
Then by Karpov's result there must be an edge in the drawing with at least two crossings. 
Similarly we repeatedly remove such an edge 
until we reach a drawing with $3n-8$ edges.
Let $G'$ be the bipartite graph corresponding to the remaining drawing.
Now, 
\begin{align*}
	 cr(G) 
	 & \geq 3(m - \floor{3.5n - 7}) +2(\floor{3.5n - 7} - (3n-8)) + cr(G') \\
	 & \geq 3(m - \floor{3.5n - 7}) +2(\floor{3.5n - 7} - (3n-8)) + (3n-8) - 2(n-2) \\
	 & \geq 3m - \floor{3.5n - 7} - (3n-8) - 2(n-2) \\
	 & \geq 3m-8.5n+19.
\end{align*}
Applying \Cref{lem:hereditary} yields $cr_2(G) \geq 3m-17n+38$.
\QED
\end{proof}


\noindent
For complete bipartite graphs,
\Cref{lem:karlb} implies that  $cr_{2}(K_{p,q}) \geq 3pq-17(p+q)+38$,
for all $p, q \geq 2$.
We use \Cref{lem:karlb} along with a counting argument
to obtain the following improved bound on $cr_{2}(K_{p,q})$.


\begin{theorem} \label{thm:biplb}
	For all $p,q \geq 21$, 
	$$ cr_{2}(K_{p,q}) \geq \frac{p(p-1)q(q-1)}{216}.$$
\end{theorem}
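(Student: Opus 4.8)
The plan is to apply the counting method of \Cref{lem:counting} with $G = K_{p,q}$ and a fixed complete bipartite subgraph $H = K_{s,s}$, where $s$ is a constant to be optimized at the end, feeding in the lower bound on $cr_2(K_{s,s})$ supplied by \Cref{lem:karlb}. The hypothesis $p,q \geq 21$ guarantees that $K_{s,s}$ embeds as a subgraph for the relevant values of $s$.

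First I would set up the two counts required by \Cref{lem:counting}. The number of copies of $K_{s,s}$ in $K_{p,q}$, taking $s$ vertices from each side of the bipartition, is $\alpha = \binom{p}{s}\binom{q}{s}$. For $\beta$ I would use that in any simple drawing the two edges forming a crossing are independent, so they meet exactly four vertices, two on each side. A copy of $K_{s,s}$ realizes such a crossing precisely when it contains all four of these vertices, so every crossing lies in exactly $\beta = \binom{p-2}{s-2}\binom{q-2}{s-2}$ copies (the same count for every crossing). This gives the clean ratio
$$ \frac{\alpha}{\beta} = \frac{p(p-1)}{s(s-1)}\cdot\frac{q(q-1)}{s(s-1)} = \frac{p(p-1)q(q-1)}{s^2(s-1)^2}. $$

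Next I would invoke \Cref{lem:karlb} with $n = 2s$ and $m = s^2$ to obtain $cr_2(K_{s,s}) \geq 3s^2 - 34s + 38$, and combine everything through \Cref{lem:counting} (dropping the ceiling, which only strengthens a lower bound):
$$ cr_2(K_{p,q}) \geq \frac{\alpha}{\beta}\,cr_2(K_{s,s}) \geq \frac{3s^2 - 34s + 38}{s^2(s-1)^2}\,p(p-1)q(q-1). $$
It then remains to choose the integer $s$ maximizing the coefficient $(3s^2 - 34s + 38)/(s^2(s-1)^2)$, equivalently minimizing $s^2(s-1)^2/(3s^2-34s+38)$, and to check that the resulting denominator does not exceed $216$.

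The main obstacle is precisely this last numerical optimization, which is genuinely tight: the coefficient is maximized near $s = 15$, where \Cref{lem:karlb} alone only yields $cr_2(K_{15,15}) \geq 203$, just short of the value needed for a denominator of $216$. To close this gap I would seek the strongest available lower bound on $cr_2(K_{s,s})$ at the optimal $s$, exploiting that $cr_2$ is integer-valued and, where possible, sharpening the per-plane estimate beyond what Karpov's and Angelini \etal's density results give directly, while relying on $p,q \geq 21$ both to guarantee the embedding and to absorb lower-order terms. Pinning the denominator to exactly $216$ through an admissible choice of $s$ is the crux of the argument; the surrounding counting and algebra are routine.
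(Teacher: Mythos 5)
Your setup is sound and matches the paper's final step, but the ``crux'' you defer at the end is a genuine gap, not a routine numerical check, and it cannot be closed with the tools you have on the table. With $H=K_{s,s}$ and \Cref{lem:karlb}, i.e.\ $cr_2(K_{s,s})\geq 3s^2-34s+38$, the optimal choice is indeed $s=15$, giving $cr_2(K_{15,15})\geq 203$ and a denominator of $\frac{15^2\cdot 14^2}{203}=\frac{44100}{203}\approx 217.24>216$; every other $s$ is worse (e.g.\ $s=14$ gives $\approx 220.8$, $s=16$ gives $\approx 219.8$). Exploiting that $cr_2$ is integer-valued in a \emph{single} application of \Cref{lem:counting} gains less than $1$, which is negligible against a bound that must grow like $p^2q^2/216$, so it cannot bridge the gap. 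Your remaining suggestion --- to ``sharpen the per-plane estimate'' beyond Karpov and Angelini \etal\ --- is a hope, not an argument; no such sharpened density result exists in the paper or is cited by you.

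The paper closes exactly this gap with a concrete device absent from your proposal: an \emph{iterated} counting argument through intermediate complete bipartite graphs. Applying \Cref{lem:counting} to the pairs $(K_{n,n},K_{n+1,n})$ and $(K_{n+1,n},K_{n+1,n+1})$ gives the recurrence
\begin{equation*}
cr_{2}(K_{n+1,n+1}) \;\geq\; \ceil{\frac{n+1}{n-1}\ceil{\frac{n+1}{n-1}\,cr_{2}(K_{n,n})}},
\end{equation*}
and because the two ceilings are typically taken at non-integers, each iteration gains slightly over the bare ratio. Iterating from the seed $cr_2(K_{15,15})\geq 203$ up to $n=21$ yields $cr_2(K_{21,21})\geq 817$, whereas a one-shot application of the counting method from $K_{15,15}$ to $K_{21,21}$ gives only $\ceil{4\cdot 203}=812$ (and hence the same hopeless ratio $\approx 217.24$). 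The final counting step from $K_{21,21}$ to $K_{p,q}$ --- which is exactly your $\alpha/\beta$ computation with $s=21$ --- then produces a denominator of $\frac{21^2\cdot 20^2}{817}\approx 215.9<216$. In short, the theorem is proved by bootstrapping integrality across several intermediate graphs, the mechanism your write-up gestures at but never makes concrete; a single use of integrality provably does not suffice.
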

\begin{proof}

Using the counting method (\Cref{lem:counting}) for $K_{n,n}$ and $K_{n+1,n}$ we have 
$$cr_{2}(K_{n+1,n}) \geq \ceil{\frac{n+1}{n-1}cr_{2}(K_{n,n})}.$$
This is because $K_{n+1,n}$ contains $n+1$ copies of $K_{n,n}$,
and each crossing realized by two edges, belongs to at most 
${n-1 \choose n-2} = n-1$ of these copies.
Using a similar argument for $K_{n+1,n}$ and $K_{n+1,n+1}$, we get 
\begin{equation} \label{eq:recurrence} 
cr_{2}(K_{n+1,n+1}) \geq \ceil{\frac{n+1}{n-1} \ceil{\frac{n+1}{n-1}cr_{2}(K_{n,n})}}.
\end{equation}
By \Cref{lem:karlb}, $cr_{2}(K_{15,15}) \geq 203$.
Plugging into (\ref{eq:recurrence}), yields $cr_{2}(K_{16,16}) \geq 266$. 
Now, we use the recurrence relation (\ref{eq:recurrence}) iteratively from $n=15$ to $21$ to get 
\begin{equation} \label{eq:k28} 
cr_{2}(K_{21,21}) \geq 817.
\end{equation}
We can now apply the counting method 
on $K_{21,21}$ and $K_{p,q}$ to obtain
$$
cr_{2}(K_{p,q}) 
\geq \frac{\binom{p}{21} \binom{q}{21}}{\binom{p-2}{19} \binom{q-2}{19}} \ cr_{2}(K_{21,21})
= \frac{p(p-1)q(q-1)}{21 \times 20 \times 21 \times 20} \ cr_{2}(K_{21,21}).
$$ 
Plugging (\ref{eq:k28}) in the above inequality yields the theorem statement.
\QED
\end{proof}

\REM{
\paragraph{Remark.}
The exact value of the denominator obtained in our proof is around $215.911$.
One may continue applying the recurrence relation (\ref{eq:recurrence}) 
to obtain better bounds for $K_{n,n}$ when $n>21$.
This leads to a slightly improved constant in the denominator,
but it does not seem to reduce the constant below 215. 
Indeed, the denominator seems to converge to a value around $215.131$.
}

\section{Biplanar Crossing Number of  Complete Graphs}
\label{sec:complete}

We now consider the biplanar crossing number of complete graphs. 
\REM{
We start with an improved upper bound on $cr_2(K_{12})$,
and then provide an improved lower bound for $cr_2(K_{n})$ in general.

\subsection{A  Biplanar Embedding of $K_{12}$}

In 1971, Owens~\cite{owen1971} gave a construction for a biplanar embedding of $K_{n}$. 
In particular, he showed that $cr_{2}(K_{12}) \leq 18$. 
Recently, Durocher \etal~\cite{Du2016} 
presented a biplanar drawing of $K_{11}$ with 6 crossings, 
and used this drawing to show that $cr_{2}(K_{12}) \leq 14$. 
We further improve this upper bound
by showing that $cr_{2}(K_{12}) \leq 12$.
The improved biplanar embedding is illustrated in \Cref{fig:k12}.

\begin{figure}[h!]
  \centering
  \includegraphics[scale=0.6]{fig1}
  \caption{A biplanar embedding of $K_{12}$ with 12 crossings.}
  \label{fig:k12}
\end{figure}

The improved embedding is obtained by a careful investigation of 
possible partitionings of the edges of $K_{12}$ into two planes,
based on an optimal biplanar embeddings of $K_{10}$ with two crossings.
The OGDF library~\cite{ogdf2013} is used
to produce the final drawing.
Combined with a lower bound presented in~\cite{cz2008}, we conclude that
$6 \leq cr_2(K_{12}) \leq 12.$

\subsection{Improved Lower Bound for $cr_{2}(K_{n})$}
} 
Czabarka \etal~\cite{cz2006} used a probabilistic method to prove that for large values of $n$,
$$
	cr_{2}(K_{n}) \geq \frac{n^4}{952}. 
$$
We improve this lower bound using the counting method. 

\begin{theorem} \label{th2}
For all $n \geq 24$,
$$cr_{2}(K_{n}) \geq \frac{n(n-1)(n-2)(n-3)}{698}.$$
\end{theorem}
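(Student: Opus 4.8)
The plan is to mirror the strategy used for the bipartite case (\Cref{thm:biplb}), building up a strong lower bound for a small complete graph via a combinatorial ``base case'' plus Euler-type edge bounds, and then amplifying it to all large $n$ through the counting method (\Cref{lem:counting}). The key is to first establish an analogue of \Cref{lem:karlb} for general (non-bipartite) simple graphs, using the known edge bounds for $1$-planar and $2$-planar simple graphs in place of the bipartite bounds $m \le 3n-8$ and $m \le 3.5n-7$.

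First I would recall the relevant edge bounds: every $1$-planar simple graph satisfies $m \le 4n-8$, and every $2$-planar simple graph satisfies $m \le 5n-10$. Fixing a minimum-crossing drawing of a simple graph $G$, I would argue that whenever $m$ exceeds $5n-10$ there must be an edge with at least three crossings, and whenever $m$ exceeds $4n-8$ there must be an edge with at least two crossings. Repeatedly deleting such edges---first those with three crossings down to $\lfloor 5n-10\rfloor$ edges, then those with two crossings down to $4n-8$ edges---and finally applying the simple-graph Euler bound $cr(G') \ge m' - 3(n-2)$ to the residual graph $G'$, I would accumulate the removed crossings to obtain a bound of the form $cr(G) \ge 3m - cn + d$ for explicit constants $c,d$. \Cref{lem:hereditary} then upgrades this to $cr_2(G) \ge 3m - 2cn + 2d$, since the class of simple graphs is hereditary and closed under edge removal.

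With such a linear lower bound in hand, I would specialize it to $K_n$ for a concrete small value (the bipartite proof used $K_{15,15}$, so here I expect something like $K_n$ for $n$ in the low-to-mid teens or twenties) to seed a recurrence. Using the counting method on $K_n$ and $K_{n+1}$---noting that $K_{n+1}$ contains $n+1$ copies of $K_n$ and each crossing, being realized by four endpoints, lies in at most $\binom{n-4}{n-5}=n-4$ copies of $K_{n-1}$-minus-one-vertex, hence at most $\binom{n-2}{n-3}=n-2$ copies of $K_n$ in the right counting---I would derive a recurrence of the form $cr_2(K_{n+1}) \ge \lceil \tfrac{n+1}{n-3}\, cr_2(K_n)\rceil$. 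Iterating this recurrence from the base case up to a suitable threshold $n_0$, and then applying the counting method once more between $K_{n_0}$ and $K_n$ to get the clean ratio $\tfrac{\binom{n}{n_0}}{\binom{n-4}{n_0-4}} = \tfrac{n(n-1)(n-2)(n-3)}{n_0(n_0-1)(n_0-2)(n_0-3)}$, would yield the final denominator $698$ after plugging in the numerically computed value of $cr_2(K_{n_0})$.

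The main obstacle I anticipate is twofold. First, getting the counting-method multiplicities exactly right for complete graphs is more delicate than in the bipartite case: a single crossing involves four vertices, so I must carefully verify how many copies of $K_{n_0}$ a crossing can belong to (this governs the factor $\binom{n-4}{n_0-4}$ versus $\binom{n}{n_0}$ in the denominator). Second, and more critically, the constant $698$ is quite tight, so the base-case value of $cr_2(K_{n_0})$ obtained from the Euler/edge-deletion argument and the recurrence iteration must be numerically sharp enough; the integrality ceilings in \Cref{lem:counting} help slightly, but I expect most of the effort to go into choosing $n_0$ and verifying that the seeded value propagates through the recurrence to clear the $698$ threshold rather than settling at a weaker constant.
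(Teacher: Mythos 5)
Your overall architecture (a linear lower bound on $cr(G)$ in terms of $m$ and $n$, amplified to $cr_2$ via \Cref{lem:hereditary}, then boosted from a small complete graph to all $n$ by \Cref{lem:counting}) is exactly the paper's, and your final amplification ratio $\binom{n}{n_0}/\binom{n-4}{n_0-4} = \frac{n(n-1)(n-2)(n-3)}{n_0(n_0-1)(n_0-2)(n_0-3)}$ is correct. But the quantitative heart of the proof is missing: the seed lemma you propose is too weak to reach the constant $698$. Stopping the edge-deletion hierarchy at the $2$-planar bound $m \leq 5n-10$ gives only $cr(G) \geq 3\left(m-(5n-10)\right) + 2\left((5n-10)-(4n-8)\right) + \left((4n-8)-3(n-2)\right) = 3m-12n+24$, hence $cr_2(G) \geq 3m-24n+48$. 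Specializing to $K_{n_0}$, the denominator $\frac{n_0(n_0-1)(n_0-2)(n_0-3)}{\frac{3}{2}n_0(n_0-1)-24n_0+48}$ is minimized near $n_0=22$, where it is about $824$ (for instance your bound gives $cr_2(K_{25}) \geq 348$, yielding $303600/348 \approx 872$). The ceiling gains from iterating $cr_2(K_{n+1}) \geq \lceil\frac{n+1}{n-3}\,cr_2(K_n)\rceil$ are on the order of a percent---compare \Cref{th3}, where thirty-odd iterations only improve $698$ to $694$---so no choice of $n_0$ closes the gap from roughly $824$ down to $698$.

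The paper instead invokes a much stronger off-the-shelf seed: Ackerman's bound $cr(G) \geq 5m - \frac{139}{6}(n-2)$ for every simple graph with $n \geq 3$, i.e., slope $5$ rather than $3$ in $m$. That is precisely what you would obtain by pushing your deletion hierarchy two more levels, through the $3$-planar edge bound $m \leq 5.5(n-2)$ and the $4$-planar bound $m \leq 6(n-2)$; with only the $1$- and $2$-planar levels the slope caps at $3$. From Ackerman's bound, \Cref{lem:hereditary} gives $cr_2(G) \geq 5m - \frac{139}{3}(n-2)$, hence $cr_2(K_{25}) \geq 435$, and a \emph{single} application of \Cref{lem:counting} from $K_{25}$ to $K_n$ yields the denominator $\frac{25\cdot24\cdot23\cdot22}{435} \approx 697.93 < 698$; no recurrence iteration is needed for this theorem (that device belongs to the sharper $n^4/694$ bound of \Cref{th3}). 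One further local slip: your multiplicity count ``each crossing lies in at most $\binom{n-4}{n-5}=n-4$ copies'' is off---a crossing spans $4$ vertices, and each copy of $K_n$ inside $K_{n+1}$ omits one vertex, so the crossing lies in exactly $n+1-4 = n-3$ copies, which is what produces the factor $\frac{n+1}{n-3}$ you eventually (correctly) wrote down.
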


\begin{proof}
We know from~\cite{ackerman2019} that
for every $G$ with $n \geq 3$,
$cr(G) \geq 5m- \frac{139}{6}(n-2).$
Applying \Cref{lem:hereditary}, we get
$$
	cr_{2}(G) \geq 5m- \frac{139}{3}(n-2).
$$
This in particular implies that $cr_{2}(K_{25}) \geq 435$.
Now, we use the counting method (\Cref{lem:counting}) on $K_{25}$ and $K_{n}$ to get
$$
cr_{2}(K_{n}) 
\geq \frac{\binom{n}{25}cr_{2}(K_{25})}{\binom{n-4}{21}} 
\geq \frac{n(n-1)(n-2)(n-3)}{\frac{25 \times 24 \times 23 \times 22}{435}},
$$
which implies the theorem statement.
\QED
\end{proof}

\noindent
We can slightly improve this result, using an iterative counting method 
similar to what we used in the previous section.

\begin{theorem} \label{th3}
	For large values of $n$,
	$$cr_{2}(K_{n}) \geq \frac{n^4}{694}.$$
\end{theorem}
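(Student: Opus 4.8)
The plan is to push the iterative counting method from Theorem~\ref{th2} further, accumulating the slack that the ceiling operations and the recurrence provide, so as to reduce the constant in the denominator from $698$ down to $694$. The base inequality $cr_2(K_{25}) \geq 435$ coming from Ackerman's bound $cr(G) \geq 5m - \frac{139}{6}(n-2)$ (combined with \Cref{lem:hereditary}) stays as the starting seed, but instead of applying the counting method in a single jump from $K_{25}$ to $K_n$, I would first run a recurrence analogous to (\ref{eq:recurrence}) to build up a better lower bound on $cr_2(K_{N})$ for some moderately large $N > 25$, and only then invoke \Cref{lem:counting} on $K_N$ and $K_n$.

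\medskip\noindent
Concretely, first I would record the single-step counting inequality for complete graphs: since $K_{n+1}$ contains $n+1$ copies of $K_n$, and each crossing (being realized by two edges on four vertices) belongs to at most $\binom{n-3}{n-4} = n-3$ of those copies, \Cref{lem:counting} gives
$$
cr_2(K_{n+1}) \geq \ceil{\frac{n+1}{n-3}\, cr_2(K_n)}.
$$
Next I would iterate this recurrence starting from the seed $cr_2(K_{25}) \geq 435$, computing $cr_2(K_{26}), cr_2(K_{27}), \ldots$ up to some target index $N$, taking the ceiling at each step. Each ceiling rounds up, so the bound one obtains is strictly better than what the naive closed-form ratio $\frac{n(n-1)(n-2)(n-3)}{22\cdot 23\cdot 24\cdot 25}\,cr_2(K_{25})$ would predict; the accumulated rounding is exactly what lets the denominator drop below $698$.

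\medskip\noindent
Finally, having obtained a strengthened bound $cr_2(K_N) \geq c_N$ at the chosen stopping index $N$, I would apply the counting method one last time on $K_N$ and $K_n$ for general large $n$, exactly as in Theorem~\ref{th2}:
$$
cr_2(K_n) \geq \frac{\binom{n}{N}\, cr_2(K_N)}{\binom{n-4}{N-4}}
= \frac{n(n-1)(n-2)(n-3)}{N(N-1)(N-2)(N-3)}\, c_N,
$$
and then choose $N$ so that $\frac{N(N-1)(N-2)(N-3)}{c_N} \leq 694$, yielding $cr_2(K_n) \geq n^4/694$ asymptotically (the lower-order terms $n^3$ etc. are absorbed in the passage to large $n$, which is why the statement is only claimed ``for large values of $n$'').

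\medskip\noindent
\textbf{The main obstacle} I anticipate is purely the bookkeeping of where to stop the recurrence: one must verify by explicit iterated computation that the gain from the ceilings actually accumulates enough to beat $698$ and reach $694$, while confirming that pushing $N$ too far gives diminishing returns (the commented-out remark in the bipartite section hints that the limiting constant converges and does not improve indefinitely). There is no deep conceptual difficulty — the tools (\Cref{lem:counting}, \Cref{lem:hereditary}, Ackerman's edge bound) are all already in hand — so the entire weight of the proof rests on carrying out the iteration carefully and selecting the optimal cutoff $N$.
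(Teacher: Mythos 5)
Your proposal follows exactly the paper's proof: the same recurrence $cr_2(K_{n+1}) \geq \lceil \frac{n+1}{n-3} cr_2(K_n) \rceil$ (the paper's inequality (\ref{eq:rec2})), the same seed $cr_2(K_{25}) \geq 435$ from Ackerman's bound and \Cref{lem:hereditary}, iteration to a cutoff (the paper stops at $N = 57$, where the ceilings yield $cr_2(K_{57}) \geq 13667$ and $57\cdot 56\cdot 55\cdot 54/13667 < 694$), and a final application of \Cref{lem:counting} from $K_N$ to $K_n$. The only detail you leave open is the explicit choice of $N$ and the numerical verification, which is precisely the bookkeeping you correctly identify as the remaining work.
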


\begin{proof}
Using the counting method (\Cref{lem:counting}) for $K_{n}$ and $K_{n+1}$ we have,
\begin{equation} \label{eq:rec2}
 cr_2(K_{n+1}) \geq \ceil{\frac{(n+1)cr_{2}(K_{n})}{n-3}}.
\end{equation}

Starting from $cr_{2}(K_{25}) \geq 435$,
we use the recurrence relation (\ref{eq:rec2}) 
iteratively from $n=25$ to $57$ to obtain $cr_{2}(K_{57}) \geq 13667$. 
Now, we use the counting method on $K_{57}$ and $K_{n}$ to get
$$
cr_{2}(K_{n}) 
\geq \frac{\binom{n}{57}cr_{2}(K_{57})}{\binom{n-4}{53}} 
\geq \frac{n(n-1)(n-2)(n-3)}{\frac{57 \times 56 \times 55 \times 54}{13667}}
\geq \frac{n(n-1)(n-2)(n-3)}{693.9},
$$
which implies
$ cr_{2}(K_{n}) \geq \frac{n^4}{694}$
for $n$ sufficiently large.
\QED
\end{proof}


\section{The Maximum Crossing Number that Implies Biplanarity}

Czabarka \etal~\cite{cz2008} defined $c^*$ as the smallest constant such that 
for every graph $G$, $cr_{2}(G) \leq c^{*} \cdot cr(G)$. 
They proved that $0.067 \leq c^{*} \leq \frac{3}{8} = 0.375$.
It is known that $cr(K_{n}) \leq \frac{n^4}{64}$~\cite{white1978}.
By \Cref{th3}, for $n$ sufficiently large, $cr_{2}(K_{n}) \geq \frac{n^4}{694}$. 
Therefore, our results from \Cref{sec:complete} imply 
an improved bound of $c^{*} \geq \frac{64}{694} \approx 0.092$.
In a more general sense, we are interested in the following problem. 

\begin{justify}
	{\bf Problem.}
	Given a positive integer $r$, find the largest integer $\m(r)$
	such that for every graph $G$, $cr(G) \leq \m(r)$ implies $cr_{2}(G) \leq r$.
\end{justify}


\noindent
For the special case of $r=0$, 
the problem is to find the largest integer $\m$
such that drawing a graph with at most $\m$ crossings in the plane 
guarantees that the graph is biplanar.
As proved by Battle \etal~\cite{battle1962} and Tutte~\cite{Tutte1963},
$K_{9}$ is not biplanar.
Moreover, we know that $cr(K_{9})=36$~\cite{lib2001}.
Therefore, $\m(0) < 36$. 

The inequality $cr_{2}(G) \leq \frac{3}{8}cr(G)$, due to Czabarka \etal~\cite{cz2008},
implies that if $cr(G) \leq 2$, then $G$ is biplanar.
Therefore, $\m(0) \geq 2$. 
We can strengthen this bound as follows.
Recall that by Kuratowski's theorem, every nonplanar graph 
contains a subdivision of $K_{3,3}$ or $K_{5}$. 
Therefore, there is no nonplanar graph with less than 9 edges. 
This leads to the following observation.

\begin{obs} \label{obs1}
	Every graph with at most 8 edges is planar. 
	The only nonplanar graph with 9 edges is $K_{3,3}$,
	and the only nonplanar graphs with 10 edges are $K_{5}$, 
	$K_{3,3}$ with an extra edge, and $K_{3,3}$ with a subdivided edge.
\end{obs}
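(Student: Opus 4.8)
The plan is to derive all three statements from Kuratowski's theorem together with a single bookkeeping fact about how subdivision affects the edge count. Observe that a subdivision of a graph $H$ is obtained by replacing some edges of $H$ with internally disjoint paths, and replacing one edge by a path of length $\ell$ increases the edge count by $\ell-1 \geq 1$. Hence any subdivision of $H$ has at least $\card{E(H)}$ edges, with equality if and only if no edge is subdivided, i.e.\ the subdivision is $H$ itself. Since $\card{E(K_{3,3})} = 9$ and $\card{E(K_5)} = 10$, every subdivision of $K_{3,3}$ has at least $9$ edges and every subdivision of $K_5$ has at least $10$ edges.

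For the first statement, I would argue by contraposition. By Kuratowski's theorem every nonplanar graph contains a subdivision of $K_{3,3}$ or of $K_5$ as a subgraph; by the edge bound above, such a subgraph already has at least $9$ edges, so the whole graph has at least $9$ edges. Therefore a graph with at most $8$ edges is planar. For the $9$-edge case, a nonplanar graph cannot contain a $K_5$-subdivision (that would require at least $10$ edges), so it contains a $K_{3,3}$-subdivision of at least $9$ edges; with only $9$ edges available this subgraph must be $K_{3,3}$ itself, with no edge subdivided, and it must use every edge of $G$. Thus $G$ is exactly $K_{3,3}$ (up to isolated vertices).

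The $10$-edge case splits along the two Kuratowski types. If $G$ contains a $K_5$-subdivision, the equality condition forces that subgraph to be $K_5$ and to exhaust all $10$ edges, so $G = K_5$. Otherwise $G$ contains a $K_{3,3}$-subdivision $S$ but no $K_5$-subdivision, and $S$ has either $9$ or $10$ edges. If $S$ has $10$ edges, it is $K_{3,3}$ with exactly one edge subdivided once, and it uses all of $E(G)$, yielding the "subdivided edge" graph. If $S = K_{3,3}$ ($9$ edges), one further edge $e$ remains; since $K_{3,3}$ is complete bipartite, the only nonadjacent pairs among its six vertices lie within a common part, so (up to isomorphism, using the symmetry of the two parts) $e$ joins two vertices of the same side, giving "$K_{3,3}$ with an extra edge."

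The step requiring the most care is this last $10$-edge analysis, specifically accounting for every way the final edge can attach. The delicate point is that $e$ could instead be incident to a new seventh vertex; this would make that vertex pendant and produce $K_{3,3}$ together with a degree-$1$ vertex, which is nonplanar, has $10$ edges, yet is not among the three listed graphs. I would resolve this by adopting (and stating) the convention that $G$ has minimum degree at least $2$, equivalently that pendant and isolated vertices are discarded: a single extra edge cannot raise a new vertex to degree $2$, so under this convention the seventh-vertex possibility is exactly the subdivision case already handled, and no placement of $e$ yields a genuinely new graph. This is harmless for the intended application, since pendant and isolated vertices affect neither $cr(G)$ nor biplanarity, but it is the point that must be pinned down to make the enumeration exhaustive.
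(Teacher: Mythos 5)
Your proof is correct and takes essentially the same route as the paper, which states the observation as an immediate consequence of Kuratowski's theorem and the edge counts $\card{E(K_{3,3})}=9$ and $\card{E(K_5)}=10$, without writing out the $9$- and $10$-edge case analysis you supply. Your closing remark is a genuine (and welcome) refinement rather than a flaw: as literally stated the $10$-edge list omits $K_{3,3}$ plus a pendant edge to a new vertex (and $K_{3,3}$ plus a disjoint $K_2$), and your convention of discarding vertices of degree at most $1$ is the right fix, harmless for the paper's application since the edge-switching argument in the $\m(0) \geq 10$ theorem treats those variants exactly like the extra-edge case.
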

From this simple observation, we can infer that $\m(0) \geq 4$ as follows.
Suppose a graph $G$ is drawn in the plane with at most 4 crossings.
The number of edges involved in these four crossings is at most 8. 
If we remove these 8 edges from the drawing, 
the remaining drawing has no crossing. 
Moreover, the subgraph of $G$ that contains only these 8 (or fewer) edges is planar by Observation~\ref{obs1}. 
Therefore, $G$ is the union of two planar graphs, and hence is biplanar. 
We will significantly improve this lower bound  in the following theorem.

\begin{theorem}
	Every graph $G$ with $cr(G) \leq 10$ is biplanar.
	In other words, $\m(0) \geq 10$. 
\end{theorem}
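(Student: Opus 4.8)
The plan is to reduce biplanarity to finding a planar \emph{transversal} of the crossings. Fix a drawing of $G$ with exactly $c = cr(G) \le 10$ crossings, and for each crossing record the unordered pair of edges that realize it. A transversal is a choice of one edge from each crossing; let $R$ be the set of chosen edges. Whatever the choice, every crossing has at least one of its edges in $R$, so the drawing restricted to $E \setminus R$ is crossing-free and $G_1 = (V, E \setminus R)$ is planar. Hence it suffices to exhibit a transversal whose edge set $R$ is itself a planar graph: then $G_2 = (V, R)$ is planar and $G = G_1 \cup G_2$ is a biplanar decomposition. Note that $|R| \le c \le 10$.

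First I would dispose of the easy regime. Since each transversal has at most $c$ edges, if $c \le 8$ then $|R| \le 8$ and $R$ is planar by \Cref{obs1}, so $G$ is immediately biplanar. It therefore remains to handle $c \in \{9,10\}$, where a transversal has $9$ or $10$ edges. By \Cref{obs1}, such an $R$ can fail to be planar only in a short list of tight configurations: $R$ must contain a copy of $K_{3,3}$ when $|R| = 9$, and when $|R| = 10$ it must be $K_5$, $K_{3,3}$ with an added edge, or a subdivision of $K_{3,3}$. In particular every obstruction already uses up $9$ or $10$ of the at most $10$ available edges, so $R$ is determined up to essentially one spare edge.

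The heart of the argument is an exchange step that repairs a nonplanar transversal. Starting from an arbitrary transversal, if $R$ is nonplanar I would locate a Kuratowski subgraph inside it and re-select, at one crossing, the other edge of that crossing so as to delete an edge of this subgraph. The point is that each minimal obstruction becomes planar after deleting a single edge — $K_5 - e$ and $K_{3,3} - e$ are planar, and the $10$-edge cases lose their obstruction once the right edge is removed — while the edge newly brought in cannot recreate a forbidden subgraph on the same vertices (for instance $K_5 - e$ is maximal planar on its five vertices, so the only edge completing a $K_5$ is the one just removed, which the swap does not reintroduce). Because re-selecting at a crossing always yields another valid transversal, the resulting set $R'$ still makes $G_1$ planar, and one checks that $R'$ is planar; iterating this terminates.

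I expect the main obstacle to be precisely this exchange analysis in the case $c = 10$. Two complications arise. First, when a single edge of $R$ is the chosen edge for several crossings, deleting it from $R$ requires re-selecting at all of those crossings simultaneously, so one must argue that some such coordinated swap still lands in a planar graph rather than merely trading one obstruction for another. Second, the $10$-edge obstructions ($K_5$, $K_{3,3}$ with an added edge, and especially the subdivided $K_{3,3}$, whose subdivision vertex has degree two) each require a small dedicated check that a planarizing swap is available among the crossings actually present; verifying that these finitely many tight configurations can always be broken — and hence that no drawing with exactly $10$ crossings forces every transversal to be nonplanar — is the crux that separates $\m(0) \ge 10$ from the elementary $\m(0) \ge 4$ bound.
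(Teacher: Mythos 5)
There is a genuine gap. Your overall plan coincides with the paper's (planarize by deleting one edge per crossing, then repair a nonplanar leftover set by swapping an edge of the Kuratowski subgraph with one of its crossing partners), but the step you defer --- ``one checks that $R'$ is planar; iterating this terminates'' --- is exactly the content of the theorem, and you say so yourself in the final paragraph: breaking the tight $9$- and $10$-edge configurations ``is the crux.'' A proof cannot leave the crux as an expectation. Worse, the machinery you set up does not support the repair step: with an arbitrary transversal, a single edge of $R$ may be the chosen edge at several crossings, so ejecting it forces several new edges into $R$ at once; the incoming edges may participate in new obstructions on different vertex sets, and you give no invariant that makes the iteration terminate rather than cycle.

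The missing idea is a specific removal order together with a pigeonhole count. The paper removes, at each step, an edge involved in the \emph{maximum} number of crossings, until the drawing is crossing-free; call the removed set $G_2$ and the rest $G_1$. If $G_2$ has $9$ or $10$ edges while the drawing had at most $10$ crossings, then since every removed edge killed at least one crossing and the first removal killed the most, every removed edge \emph{except possibly the first} killed exactly one crossing. This structural fact is what your transversal lacks, and it makes the swap clean and finite (no iteration at all): when $G_2$ is $K_5$ or $K_{3,3}$, the last-removed edge $e$ crossed exactly one edge $f$ of $G_1$, so $G_1 - f + e$ is still crossing-free, and $G_2 - e + f$ has lost the unique Kuratowski subgraph (a graph on $\le 10$ edges cannot contain one after deleting an edge of its $K_5$ or $K_{3,3}$, by \Cref{obs1}); when $G_2$ is $K_{3,3}$ with an extra or subdivided edge, it has exactly $10$ edges, so \emph{every} edge of $G$ crosses at most one other edge, and one may pick $e$ inside the $K_{3,3}$ part, with unique crossing partner $f$ which itself crosses only $e$, and perform the same swap. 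Your exchange sketch gestures at this ($K_5 - e$ is maximal planar, etc.) but without the max-first order and the resulting ``exactly one crossing per removed edge'' count, you cannot guarantee a single, coordinated, planarity-preserving swap exists, which is precisely the gap between $\m(0) \ge 4$ and $\m(0) \ge 10$.
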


\begin{proof}
Let $G$ be a graph with $cr(G) \leq 10$.
Fix a drawing of $G$ with a minimum number of crossings.
We repeatedly remove an edge from  the drawing 
that involves in a maximum number of crossings
until there remains no more crossings.
Let $G_{1}$ be the graph corresponding to the remaining drawing,
and $G_{2}$ be the graph formed by the removed edges. 
Clearly, $G_{2}$ has at most 10 edges.
Moreover, $G_{1}$ is planar by construction. 
If $G_{2}$ has 8 or less edges, then it is planar by Observation~\ref{obs1},
and we are done.
Otherwise, $G_{2}$ has 9 or 10 edges.
Note that removing any of these edges from $G$ has removed at least one crossing. 
Therefore, removing any of these edges, except possibly the first one, 
has removed exactly one crossing from $G$. 
By Observation~\ref{obs1}, if $G_{2}$ is not planar, then it is either
$K_{5}$, $K_{3,3}$, $K_{3,3}$ with a subdivided edge, or 
$K_{3,3}$ with an extra edge. 
In the former two cases, let $e$ be the last edge removed from $G$.
Clearly, $e$  was crossing exactly one edge $f$ in $G_1$ just before removal.
Therefore, switching $e$ and $f$ between $G_{1}$ and $G_{2}$ keeps $G_{1}$ planar. 
Moroever, the new $G_{2}$ is planar, 
because it contains no subdivision of $K_{5}$ and $K_{3,3}$.
Hence, $G$ is biplanar in the first two cases.
In the latter two cases, i.e., when
$G_{2}$ is a $K_{3,3}$ with a subdivided edge or a $K_{3,3}$ with an extra edge,
$G_2$ has exactly 10 edges.
Therefore,  removing any of these edges from $G$ has removed exactly one crossing,
which means that any edge in $G$ is crossing at most one  edge.
If $G_{2}$ is a $K_{3,3}$ with a subdivided edge, let $e$ be any edge of $G_{2}$ 
except the two edges forming the subdivided edge,
and if $G_{2}$  is a $K_{3,3}$ with an extra edge,
let $e$ be any edge of $G_{2}$ except this extra edge. 
We know that $e$ was crossing exactly one edge $f$ in $G$. 
Moreover, $f$ was only crossing $e$ in $G$,
and hence, it remains in $G_1$ after removing $e$.
Similar to the previous case, switching $e$ and $f$ between $G_{1}$ and $G_{2}$
completes the proof. 
\QED
\end{proof}

\section{$k$-Planar Crossing Number of $K_{n}$ and $K_{p,q}$}

In this section, we provide improved lower bounds on the $k$-planar crossing number
of complete bipartite and complete  graphs.
Shahrokhi \etal~\cite{sha2007} proved that for any positive integer $k$, 
and sufficiently large integers $p$, $q$, and $n$:  
$$
	cr_{k}(K_{p,q}) \geq \frac{p(p-1)q(q-1)}{108k^2},
$$
and
$$
	cr_{k}(K_{n}) \geq \frac{n(n-1)(n-2)(n-3)}{432k^2}. 
$$

\noindent
We improve these results using the ideas developed in Sections 2 and 3.

\begin{theorem} 
	For all $p,q \geq 8k+2$,
	$$cr_{k}(K_{p,q}) \geq \frac{p(p-1)q(q-1)} {73.2 k^2}.$$
\end{theorem}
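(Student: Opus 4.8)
The plan is to mirror the structure of \Cref{th3} and \Cref{thm:biplb}, but with the $k$-planar lower bound for bipartite graphs supplied by \Cref{lem:karlb} playing the role of the base case. First I would establish a concrete numerical lower bound for $cr_k$ on a small balanced complete bipartite graph $K_{N,N}$ by combining Euler-type edge bounds with \Cref{lem:hereditary}. Specifically, \Cref{lem:karlb} was derived precisely by applying \Cref{lem:hereditary} (with $k=2$) to the per-plane bound $cr(G) \geq 3m - 8.5n + 19$; the same underlying inequality $cr(G) \geq 3m - 8.5n + 19$ for bipartite graphs yields, via \Cref{lem:hereditary} with general $k$, the bound $cr_k(G) \geq 3m - k(8.5n - 19)$. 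Evaluating this at $K_{N,N}$ (where $m = N^2$ and $n = 2N$) gives a clean numerical starting value once $N$ is chosen large enough relative to $k$; the hypothesis $p,q \geq 8k+2$ signals that the intended base size scales linearly in $k$, so I would set the base graph to be $K_{N,N}$ with $N$ on the order of $8k$.

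Next I would run the counting recurrence exactly as in \Cref{thm:biplb}. The two counting steps there established
$$cr_k(K_{N+1,N+1}) \geq \ceil{\frac{N+1}{N-1}\ceil{\frac{N+1}{N-1}cr_k(K_{N,N})}},$$
and the identical argument ($K_{N+1,N}$ contains $N+1$ copies of $K_{N,N}$, each crossing lying in at most $N-1$ copies, and symmetrically for the second coordinate) goes through verbatim for general $k$ since \Cref{lem:counting} is already stated for $k$-planar drawings. I would then apply the final non-iterative counting step from $K_{N,N}$ up to $K_{p,q}$, using the identity
$$\frac{\binom{p}{N}\binom{q}{N}}{\binom{p-2}{N-2}\binom{q-2}{N-2}} = \frac{p(p-1)q(q-1)}{N^2(N-1)^2},$$
to conclude $cr_k(K_{p,q}) \geq \frac{p(p-1)q(q-1)}{N^2(N-1)^2}\,cr_k(K_{N,N})$, valid for all $p,q \geq N$. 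The target denominator $73.2 k^2$ then forces the requirement that $cr_k(K_{N,N})/(N^2(N-1)^2) \geq 1/(73.2k^2)$, i.e. that the base value grows like $N^4/(73.2 k^2)$.

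The main obstacle is calibrating the base size $N$ as a function of $k$ so that the constant lands at $73.2$ rather than something worse, and verifying that the stated threshold $p,q \geq 8k+2$ is exactly what the argument delivers. Because the per-plane bound $3m - 8.5n + 19$ evaluated at $K_{N,N}$ gives $3N^2 - k(17N - 19)$, this base value is positive and quartically meaningful only once $N \gtrsim 6k$ or so, and the ratio $(3N^2 - k(17N-19))/(N^2(N-1)^2)$ is maximized (and pushed down toward its asymptotic constant) for $N$ a suitable multiple of $k$. I would therefore substitute $N = 8k+1$ (so that $p,q \geq 8k+2 = N+1$ guarantees the final counting step applies to $K_{N+1,N+1}$ and beyond), check that the resulting ratio is at least $1/(73.2 k^2)$ for all $k \geq 1$, and if the bare base bound falls slightly short I would gain the missing factor by invoking the iterative recurrence \eqref{eq:recurrence}-style refinement a few times before the final counting step, exactly as was done to pass from $K_{15,15}$ to $K_{21,21}$ in \Cref{thm:biplb}. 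The delicate part is purely the arithmetic of ensuring the $k$-dependence cancels cleanly to leave a $k$-independent constant $73.2$; the combinatorial structure is entirely inherited from the $k=2$ proofs.
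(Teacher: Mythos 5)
Your proposal takes essentially the same route as the paper's proof: the paper applies the bipartite bound $cr(G) \geq 3m - 8.5n + 19$ together with \Cref{lem:hereditary} to get a quadratic-in-$k$ lower bound on $cr_k(K_{8k+2,8k+2})$, then performs a single counting step up to $K_{p,q}$, exactly as you outline (with base size $8k+2$ rather than your $8k+1$; either choice works). Your fallback of iterating the \eqref{eq:recurrence}-style refinement is unnecessary — the bare base value already makes the ratio at most $\frac{512}{7}k^2 \approx 73.14\,k^2 \leq 73.2\,k^2$ for all $k \geq 1$, which is precisely how the paper concludes.
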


\begin{proof}
We apply the counting method (\Cref{lem:counting}) on $K_{8k + 2,8k + 2}$ and $K_{p,q}$. 
As noted in the proof of \Cref{lem:karlb}, for every bipartite graph $G$,
$cr(G) \geq 3m-8.5n+19$.
Therefore, by \Cref{lem:hereditary}, $cr_k(G) \geq 3m - (8.5n - 19) k$.
This yields
$$
	cr_{k}(K_{8k + 2,8k+2}) \geq 56k^2 + 43k +12.
$$
Hence,
\begin{align*}
cr_{k}(K_{p,q}) &
\geq \frac{\binom{p}{8k +2} \binom{q}{8k+2} cr_{k}(K_{8k+2,8k+2})}{\binom{p-2}{8k} \binom{q-2}{8k}}
= \frac{p(p-1)q(q-1)cr_{k}(K_{8k+2,8k+2})}{(8k+2)(8k+1)(8k+2)(8k+1)} \\
& \geq \frac{p(p-1)q(q-1)}{\frac{(8k+2)^2(8k+1)^2}{56k^2+43k+12}} 
\geq \frac{p(p-1)q(q-1)}{\frac{512}{7}k^2},
\end{align*}
which completes the proof.
\QED
\end{proof}

\begin{theorem} 
	For all $n \geq 14k - 3$, 
	$$cr_{k}(K_{n}) \geq \frac{n(n-1)(n-2)(n-3)}{232k^2}.$$
\end{theorem}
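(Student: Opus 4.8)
The plan is to mirror exactly the structure used for the complete bipartite case in the immediately preceding theorem, combining the linear crossing-number bound for simple graphs with \Cref{lem:hereditary} and the counting method (\Cref{lem:counting}). First I would invoke the Euler-type inequality for simple graphs. The earlier proof of \Cref{th2} uses the strong bound $cr(G) \geq 5m - \frac{139}{6}(n-2)$ from~\cite{ackerman2019}, so I would apply \Cref{lem:hereditary} with $f(m) = 5m$ and $g(n) = \frac{139}{6}(n-2)$ to obtain
$$
	cr_k(G) \geq 5m - \frac{139}{6}(n-2)\,k
$$
for every simple graph $G$ and every positive integer $k$.

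Next I would evaluate this bound on the base complete graph $K_{14k-3}$. The choice $n = 14k-3$ is engineered so that the seed value is safely positive and the final constant comes out as advertised; here $m = \binom{14k-3}{2}$ and $n-2 = 14k-5$, so a direct substitution gives a lower bound on $cr_k(K_{14k-3})$ that is quadratic in $k$ with leading coefficient matching the target. I expect this seed bound to take the form $cr_k(K_{14k-3}) \geq a k^2 + bk + c$ for explicit constants, analogous to the $56k^2+43k+12$ that appeared in the bipartite theorem.

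Then I would apply the counting method (\Cref{lem:counting}) to $K_{14k-3}$ sitting inside $K_n$, exactly as in \Cref{th2,th3}. The graph $K_n$ contains $\binom{n}{14k-3}$ copies of $K_{14k-3}$, and each crossing, being realized by two edges on four vertices, belongs to $\binom{n-4}{14k-7}$ of these copies. The ratio telescopes to
$$
	cr_k(K_n) \geq \frac{\binom{n}{14k-3}}{\binom{n-4}{14k-7}}\,cr_k(K_{14k-3})
	= \frac{n(n-1)(n-2)(n-3)}{(14k-3)(14k-4)(14k-5)(14k-6)}\,cr_k(K_{14k-3}).
$$
Substituting the seed bound and simplifying the constant should yield the factor $232$ in the denominator once the leading $k^4$ terms are compared.

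The main obstacle is purely the arithmetic bookkeeping of the constant: I must verify that $\frac{(14k-3)(14k-4)(14k-5)(14k-6)}{a k^2 + bk + c}$, after dividing by the $k^2$ in the seed bound, is bounded above by $232 k^2$ for all $k \geq 1$ (not merely asymptotically), since the theorem is stated for all admissible $n$ rather than only in the limit. I would confirm that the lower-order terms in $k$ work in the right direction so that the clean denominator $232 k^2$ dominates the exact expression across every positive integer $k$; this is the same kind of uniform-in-$k$ check that the bipartite theorem required when it replaced the exact fraction by $\frac{512}{7}k^2$.
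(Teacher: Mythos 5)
Your proposal is correct and follows essentially the same route as the paper's own proof: the same bound $cr(G) \geq 5m - \frac{139}{6}(n-2)$ from~\cite{ackerman2019} combined with \Cref{lem:hereditary}, the same seed graph $K_{14k-3}$ (whose bound works out to $\frac{497}{3}k^2-\frac{775}{6}k+30$), and the same counting-method step giving the ratio $\frac{n(n-1)(n-2)(n-3)}{(14k-3)(14k-4)(14k-5)(14k-6)}$. Your closing arithmetic check is exactly the verification the paper leaves implicit, and it goes through since the exact constant is about $231.9$ asymptotically and smaller for small $k$.
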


\begin{proof}
We use the counting method (\Cref{lem:counting}) for $K_{14k-3}$ and $K_{n}$. 
Recall that for every $G$ with $n \geq 3$,
$cr(G) \geq 5m- \frac{139}{6}(n-2)$~\cite{ackerman2019}.
Therefore, 
$cr_{k}(G) \geq 5m- \frac{139}{6}(n-2)k$ by \Cref{lem:hereditary}.
Thus,
$$
cr_{k}(K_{14k-3}) \geq \frac{497}{3}k^2-\frac{775}{6}k+30.
$$
Therefore,
$$cr_{k}(K_{n}) \geq \frac{\binom{n}{14k-3}cr_{k}(K_{14k-3})}{\binom{n-4}{14k-7}} = \frac{n(n-1)(n-2)(n-3)cr_{k}(K_{14k-3})}{(14k-3)(14k-4)(14k-5)(14k-6)},$$ 
which implies the theorem.
\QED
\end{proof}
 
\section{Conclusion}   
In this paper, we presented several improved bounds on 
the biplanar and $k$-planar crossing number of complete graphs and complete bipartite graphs. 
An obvious open problem is whether the asymptotic approximation factors presented in this paper can be further improved.
We also posed an open problem of finding the largest positive integer $\m(r)$ such that 
$cr(G) \leq \m(r)$ implies $cr_{2}(G) \leq r$. 
In particular, we proved that $10 \leq \m(0) \leq 35$. 
This definition can be easily generalized to the $k$-planar case:
given positive integers $k$ and $r$, find the largest integer $\m_k(r)$ such that 
$cr(G) \leq \m_{k}(r)$ implies $cr_{k}(G) \leq r$. 
Determining the value of $\m_{k}(r)$ is an intriguing problem,
even for the special case of $r=0$.



\bibliography{bibs/full,bibs/ref}

\end{document}